\newtheorem{theorem}{Theorem}
\newtheorem{definition}{Definition}
\begin{document}
\title{Diffusion of Cooperative Behavior in Decentralized Cognitive Radio Networks with Selfish Spectrum Sensors}

\author{Amitav~Mukherjee
\thanks{A. Mukherjee is with Hitachi America Ltd., Santa Clara, CA 95050, USA. (e-mail: \tt{ amitav.mukherjee@hal.hitachi.com})}
}
\maketitle

\begin{abstract}
This work investigates the diffusion of cooperative behavior over time in a decentralized cognitive radio network with selfish spectrum-sensing users. The users can individually choose whether or not to participate in cooperative spectrum sensing, in order to maximize their individual payoff defined in terms of the sensing false-alarm rate and transmit energy expenditure. The system is modeled as a partially connected network with a statistical distribution of the degree of the users, who play their myopic best responses to the actions of their neighbors at each iteration. Based on this model, we investigate the existence and characterization of Bayesian Nash Equilibria for the diffusion game. The impacts of network topology, channel fading statistics, sensing protocol, and multiple antennas on the outcome of the diffusion process are analyzed next. Simulation results that demonstrate how conducive different network scenarios are to the diffusion of cooperation are presented for further insight, and we conclude with a discussion on additional refinements and issues worth pursuing.
\end{abstract}

\begin{IEEEkeywords}
Cognitive radio, cooperative spectrum sensing, diffusion, network games, Bayesian Nash Equilibrium.
\end{IEEEkeywords}

\section{INTRODUCTION}\label{sec:intro}

Dynamic spectrum access (DSA) by interweave cognitive radios (ICRs) is emerging as a promising solution to enable better
utilization of the radio spectrum, especially in bands that are currently
under-utilized \cite{Niyato_Han,Jafar09}. DSA partitions wireless terminals into categories of primary (licensed) and secondary (cognitive radio) users, where the primary users (PUs) have priority in accessing the shared spectrum.
ICRs are allowed to opportunistically use the spectrum only when it is not occupied by primary transmitters (PTs) that have priority. Therefore, the ICRs do not cause interference to the PUs in principle.  In the absence of standard control channels or coordinated medium access between the primary and secondary users, the ICRs must periodically sense the spectrum for the presence of PTs and cease transmission upon detection. Local spectrum sensing (LSS) algorithms where each ICR makes an independent decision on whether the spectrum is available have been studied extensively, e.g., \cite{Simon07}--\cite{Zeng10}.

LSS may fail to provide sufficient accuracy due to the vagaries of the wireless medium such as deep fades and shadowing. As a remedy, cooperative spectrum sensing (CSS) has been shown to greatly increase the reliability
of spectrum sensing, with a corresponding increase in complexity and energy consumption \cite{Akyildiz11}. Under CSS,
each ICR either sends its local sensing data/decision to a central collector known as the fusion center \cite{Atapattu11}-\cite{Santucci09}, or shares this information with its neighbors in the case of distributed ICR networks \cite{Huang10}-\cite{Penna12}. However, the fundamental assumption in the CSS studies cited thus far is that all ICRs \emph{willingly engage} in cooperative sensing in order to optimize a global performance metric. In decentralized networks, ICRs can potentially pursue selfish motives and make independent decisions regarding whether to cooperate with their peers via CSS or to act alone by adopting LSS. Therefore, there has been recent interest in game-theoretic models of decentralized networks where ICRs act selfishly in order maximize individual utilities \cite{Cheng11}-\cite{Cheng12}. Yuan \emph{et al.} \cite{Cheng11} study generalized Nash Equilibria of a non-cooperative game where the ICR throughput features in the utility function. Evolutionary game models and associated evolutionary stable strategies with throughput as payoffs are formulated in \cite{Liu10,Letaief09}. Cooperative game-theoretic approach for coalition formation in fixed network topologies are presented in \cite{DaSilva10,Poor12}, while \cite{Cheng12} assumes the presence of a fusion center and ICRs non-cooperatively optimize the frequency over time with which they participate in CSS.  This paper has several major differences from \cite{Cheng11}-\cite{Cheng12} as we analyze Bayesian Nash Equilibria under an imperfect information scenario, base the ICR utility function on the sensing false-alarm rate, and explicitly consider a complex, partially-connected network topology with an arbitrary degree distribution.

On a broader level, there has been extensive recent work on distributed estimation or detection of a single parameter of interest without the need for a central fusion center. Diffusion strategies for distributed estimation/detection in decentralized networks where all users cooperate have been studied in \cite{Sayed08}-\cite{Sayed12}, for example. Consensus algorithms for in-network computation have been presented in \cite{Huang10,Penna12,Barbarossa10} among others. While \cite{Huang10,Penna12,Sayed08}-\cite{Barbarossa10} focus on the \emph{diffusion of information} between collaborating peers, this work is focused on the \emph{diffusion of cooperation} across a decentralized interweave network composed of selfish users. In other words, we are interested in determining conditions under which a given ICR network converges towards or diverges away from global cooperation over time when users individually adapt to the behavior of their peers, with information exchange being implicit.

The motivation to study the diffusion of cooperation is as follows. The autonomous nature of the distributed ICR network makes it difficult for the cognitive network designer to predict the behavior of the ICRs, and whether a steady-state outcome exists for the network as a whole, in terms of what fraction of ICRs end up cooperating. This is a critical question, since the fundamental basis of deploying spectrum-sensing ICRs is to maximize the utilization of unused spectrum - which is achieved when the ICR false-alarm rate is minimized (subject to a primary detection probability constraint).
As the per-ICR false-alarm rate is minimized when \emph{all} ICRs participate in CSS, studying the diffusion of cooperation in the network and the corresponding steady-state properties illustrates how close or far the distributed network is from this ideal outcome. The diffusion analysis in this work shows how key physical and network parameters such as sensing protocol, network degree distribution, number of antennas, shadowing correlation and path loss exponent, etc. impact the steady-state cooperation outcomes of the distributed ICR network.

The paper is organized as follows. Section~\ref{sec:model} introduces the mathematical model of the decentralized ICR network and the spectrum sensing performance of the ICRs. The game-theoretic model and equilibrium properties of the diffusion process are presented in Sec.~\ref{sec:DiffusionGame}. Sec.~\ref{sec:Outcomes} examines the impact of various network and sensing parameters on the diffusion process.
Selected numerical examples are shown in Section~\ref{sec:Sim} followed by a discussion of further research issues in Section~\ref{sec:Discuss}, and we conclude in Section~\ref{sec:Conclusion}.

\emph{Notation}: We will use $\mathcal{N}(\mathbf{m},\mathbf{Z})$ to denote a multivariate Gaussian distribution with mean $\mathbf{m}$ and covariance matrix $\mathbf{Z}$, and define the Gaussian $Q$-function as $Q\left( x \right) = {\left( {\sqrt {2\pi } } \right)^{ - 1}}\int_x^\infty  {{e^{ - {{{u^2}} \mathord{\left/
 {\vphantom {{{u^2}} 2}} \right.
 \kern-\nulldelimiterspace} 2}}}du}$. We also use
$\mathcal{E}\{\cdot\}$ to denote expectation, $(\cdot)^T$ for the transpose, $(\cdot)^H$ for the Hermitian
transpose, $(\cdot)^{-1}$ for the matrix inverse, $\mathbf{x}_{ - i}$ to denote a vector excluding its $i^{th}$ component, ${\left[ {\mathbf{A}} \right]_{i,j}}$ is the $(i,j)$ element of matrix $\mathbf{A}$, and $\mathbf{1}$ is a column vector of all ones.

\section{Mathematical Model}\label{sec:model}
\subsection{Network Model}\label{subsec:Network}
\begin{figure}[htbp]\label{fig:ICRNetwork}
\centering
\includegraphics[width=\linewidth]{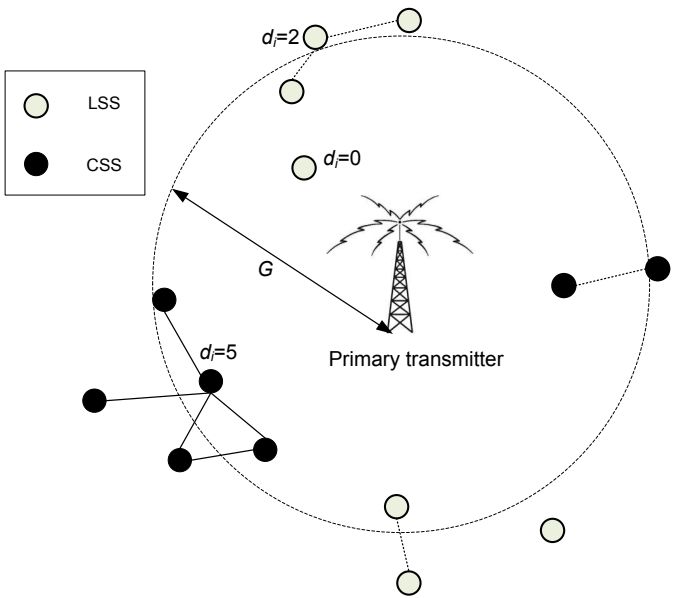}
\caption{Decentralized ICR network with cooperative and non-cooperative users of various degrees. }
\end{figure}

As shown in Fig.~1, the system under consideration is composed of a PT and $n$ single-antenna ICRs that form a partially-connected network represented by the undirected graph $\mathcal{G}\left( {V,E} \right)$, with $V=1,\ldots,n,$ and $E$ being the set of edges or active peer-to-peer links. The $i^{th}$ ICR has a \emph{degree} $d_i$ drawn from a probability distribution $P(d)$ with support $d=0,1,\ldots,D$, such that $\sum\nolimits_{d = 0}^D {P\left( d \right)}  = 1$.  Each ICR is provided prior knowledge of the degree distribution $P(d)$ of the overall network by the network designer and knows its own degree $d_i$, but does not have any knowledge of the degrees of its neighbors. The network is decentralized in the sense that no central fusion center is present, and only localized peer-to-peer interactions are allowed.

At each time instant of the diffusion process (to be defined in Sec.~\ref{sec:DiffusionGame}), the neighbors of each user are a random draw from the population specified by $P(d)$, which indicates that a new network realization is perceived at every interaction epoch\footnote{This assumption is consistent with a mean-field approximation as elaborated upon in Sec.~\ref{sec:time}.}. However, the degree of each ICR remains unchanged over time.
By definition, degree distribution $P(d)$ characterizes the degree of an arbitrary node anywhere in the network. We are now interested in the degree distribution ${\tilde P}(d)$ of a neighboring node, i.e., given some node $i$ with degree $d_i$, what is the degree distribution of its $j^{th}$ neighbor? It is immediately clear that this is not also equal to $P(d)$, since $P(d)$ includes degree-0 nodes in its support that have no neighbors. Instead, consider a randomly chosen edge (link) in the network graph, which by definition exists between neighboring nodes. Once again, the degree of a node we reach by following a randomly chosen edge is not given by $P(d)$.
Since there are $d$ edges that arrive at a node of degree $d$, we are $d$ times as likely
to arrive at that node than another node that has degree 1.
Thus, the probability that the neighboring node is of degree $d$ is proportional to $P(d)d$; normalizing to yield a valid distribution function we obtain \cite{Pintado08}
\begin{equation}
\tilde P\left( d \right) = \frac{{P\left( d \right)d}}{{\sum\nolimits_d {P\left( d \right)d} }}.
\end{equation}

\subsection{Spectrum Sensing Protocol}\label{subsec:SpecSens}
We now specify the spectrum sensing algorithm employed by the ICRs to detect the presence of the PT. Assume that the ICRs are allowed to access the spectrum only if they are located outside a guard region of radius $G$ centered on the PT. Let $y_i$ denote the SNR in dB (logarithmic scale) of the PT signal received at ICR $i$ located at a distance $r_i$ from the PT. Under log-normal shadowing and distance-dependent path loss on the sensing channel, $y_i$ is distributed as a Gaussian random variable with variance $\sigma^2$ and mean $\mu(r_i)$ \cite{Visotsky05,Sousa07}.

For the CSS scenario with a cluster of $c$ cooperating ICRs, the local SNRs are collected into ${{\mathbf{y}} = {{\left[ {{y_1}, \ldots ,{y_c}} \right]}^T}}$ at a randomly chosen participant within the group. We assume each ICR truthfully reports its actual local observation during CSS, and defer discussion of malicious behavior to Sec.~\ref{sec:Discuss}. Since the inter-neighbor distances are much smaller in magnitude compared to $\left\{ {{r_i}} \right\}_{i = 1}^c$, we assume that the SNR reporting is conducted without error or delay, and that the cooperating ICRs are all roughly the same distance from the PT: ${r_1} \approx {r_2} \approx  \ldots  \approx {r_c} = r$ \cite{Visotsky05}-\cite{Veeravalli08}. However, the small intra-cluster size also implies that the SNR observations are correlated, such that the normalized covariance matrix ${\mathbf{\Sigma }} = {\sigma ^{ - 2}}\mathcal{E}\left\{ {{\mathbf{y}}{{\mathbf{y}}^H}} \right\}$ is non-diagonal and of full rank.

Therefore, the binary hypothesis test under CSS to determine whether the cooperating cluster is within the guard region (hypothesis $H_0$) or not (hypothesis $H_1$) is
\begin{align}
  {{H_0}}: &\quad{{\mathbf{y}} \sim \mathcal{N}\left( {\mu \left( r \right){\mathbf{1}},{\sigma ^2}{\mathbf{\Sigma }}} \right)}, \\
  {{H_1}}: &\quad{{\mathbf{y}} \sim \mathcal{N}\left( {\mu \left( {r + \delta } \right){\mathbf{1}},{\sigma ^2}{\mathbf{\Sigma }}} \right)},
\end{align}
for some $\delta > 0$. We assume that both CSS and LSS must achieve a minimum probability of detection target $\beta$, equivalent to a missed-detection probability constraint of $(1-\beta)$. The imposition of the constraint $\beta$ ensures a minimum level of protection for the PUs from unintentional ICR interference. The Neyman-Pearson-optimal likelihood ratio test \cite{KayVolII} that minimizes the probability of false alarm subject to the constraint $\beta$ is given by the decision rule \cite{Visotsky05}
\[\frac{{{{\mathbf{1}}^T}{{\mathbf{\Sigma }}^{ - 1}}{\mathbf{y}}}}{{{{\mathbf{1}}^T}{{\mathbf{\Sigma }}^{ - 1}}{\mathbf{1}}}} \mathop\gtrless \limits_{{{H}_0}}^{{{H}_1}} T\]
with threshold $T = \mu \left( G \right) - {Q^{ - 1}}\left( \beta  \right)$. Define $\Delta \delta  \equiv \mu \left( {R + \delta } \right) - \mu \left( R \right)$, $\Delta \delta <0$. Under an exponential spatial correlation model ${\left[ {\bf{\Sigma }} \right]_{i,j}} = {\rho ^{\left| {i - j} \right|}}$, the corresponding probability of false alarm can be approximated as \cite{Sousa07}
\begin{equation}\label{eq:PFA_CSS}
{P_{FA,CSS}}\left( c \right) = 1 - Q\left( {\frac{{\Delta \delta }}{\sigma }\sqrt {\frac{{\left( {1 - \rho } \right)c + 2\rho }}{{1 + \rho }}}  + {Q^{ - 1}}\left( \beta  \right)} \right).
\end{equation}
The false alarm probability under LSS is easily obtained from \eqref{eq:PFA_CSS} as
\begin{equation}\label{eq:PFA_LSS}
{P_{FA,LSS}} = 1 - Q\left( {\frac{{\Delta \delta }}{\sigma } + {Q^{ - 1}}\left( \beta  \right)} \right).
\end{equation}

At first glance, the primary users appear to be equally protected from ICR interference under LSS and CSS, due to the detection probability constraint $\beta$ imposed on both scenarios. However, from the perspective of the secondary network designer it is always desirable to have as many ICRs cooperate as possible. Since ${P_{FA,CSS}}\left( c \right)$ is monotonically decreasing in $c$, reducing the false-alarm probability via CSS greatly improves the spectrum access opportunities of the ICRs. The reader is referred to \cite{Akyildiz11,Liu11} for specific details on protocols for CSS information exchange and medium access control after the sensing decisions.

\subsection{Cost of Cooperation}
The increased complexity of CSS incurs an additional cost in terms of energy consumption and delay relative to LSS \cite{Cheng11}. Since the diffusion process is iterative by definition, we assume the additional delay due to CSS is negligible compared to the diffusion time scale. The cost of additional energy consumption in reporting SNRs/decisions to a neighbor is modeled as follows. Assume that other ICRs in the proximity of arbitrary user $i$ are its neighbors if they lie within a circle of radius $R$ centered at $i$, with a total of $d_i$ such ICRs. $R$ is thus assumed to be the maximum communication range of the ICRs.
 Let random variable $Y$ represent the distance between neighboring ICRs. The energy consumed by ICR $i$ when it participates in CSS (incurred while either reporting $y_i$ or the CSS decision) is assumed to be proportional to $Y^\alpha$:
  \begin{equation}
  E=cY^\alpha
  \end{equation}
  with $\alpha$ as the path loss exponent of the network, and some proportionality constant $c>0$.

 In sophisticated network geometry models, the user locations are often assumed to be governed by a two-dimensional stochastic point process \cite{Haenggi09}. Assuming $n$ is asymptotically large, we can apply a Poisson point process (PPP) model of example intensity
 \[
 \mu=\pi (R+D)^2/n
 \]
 where $D\gg R$, for which the cdf of the distance to the nearest neighbor \cite{Bettstetter02} is
 \begin{equation}
 {F_Y}\left( y \right) = 1 - {e^{ - \mu \pi {y^2}}}.
 \end{equation}
The cdf of the energy cost is then obtained as
\begin{equation}\label{eq:Energy_cdfPPP}
{F_E}\left( x \right) =1 - {e^{ - \mu \pi {c^{{{ - 2} \mathord{\left/
 {\vphantom {{ - 2} \alpha }} \right.
 \kern-\nulldelimiterspace} \alpha }}}{x^{{2 \mathord{\left/
 {\vphantom {2 \alpha }} \right.
 \kern-\nulldelimiterspace} \alpha }}}}}.
  \end{equation}
We can similarly compute the cost distribution for various other statistical models of the user spatial locations. More generally, in the terminology of Bayesian game theory we can define the degree of a neighboring ICR and energy cost as its `type', which is private information known only to itself. However, the statistical distribution of each user's type is publicly known to all nodes, which is a standard assumption in Bayesian games \cite{Fudenberg}.
Furthermore, note that we can replace the CSS and LSS protocols described above by any arbitrary choice of decision rule (e.g., energy detection, feature detection, hard decision combining) without altering the game-theoretic analysis of the diffusion process in the sequel.

\section{Diffusion Network Game}\label{sec:DiffusionGame}
\subsection{Strategic Non-cooperative Game}
Having delineated the statistical properties of the ICR network and spectrum sensing protocols, we now model the diffusion process as a non-cooperative game that evolves over discrete time steps $t=0,1,\ldots,L$.
Every ICR has a (pure) strategy set of two possible actions $\mathcal{A}=\{0,1\}$ and plays $a\in \mathcal{A}$, where action $a=0$ corresponds to LSS and $a=1$ corresponds to choosing CSS. Let $x^t$ represent \emph{the probability at time $t$ that an arbitrary neighbor anywhere in the network chooses to cooperate} by playing $a=1$. Generally, an ICR may participate in multiple clusters at a time. In that sense $x^t d$ is an approximation of the number of cooperative neighbors. Due to the communication range limit $R$ and the fact that the CSS outcome is broadcast by the local cluster-head, all ICRs within a cluster are assumed to be direct neighbors of each other. The probability that each ICR participates in more than one cluster per time step is therefore considered to be small.
We then define the \emph{utility} obtained by a degree-$d_i$ ICR when it chooses to cooperate as
\begin{equation}\label{eq:utilityCSS}
{u_{{d_i}}}\left( {1,{x^t}} \right) = 1-{P_{FA,CSS}}\left( {{x^t}{d_i}} \right)
\end{equation}
where $P_{FA,CSS}\left( \cdot \right)$ is the cooperative false-alarm rate defined in \eqref{eq:PFA_CSS}. The additional energy cost incurred by ICR $i$ under CSS is represented by $E_i$ drawn from distribution $F_E(x)$ [cf. \eqref{eq:Energy_cdfPPP}], and costs are i.i.d. across ICRs.
Similarly, ICRs abstaining from cooperation obtain
\begin{equation}\label{eq:utilityLSS}
{u_{{d_i}}}\left( {0,{x^t}} \right) = 1-{P_{FA,LSS}}.
\end{equation}
The false-alarm probability is meaningful as a utility function since it has a direct impact on the ICR throughput. Since ICRs must refrain from transmission if they decide that a primary user is active, a higher false alarm rate corresponds to decreased opportunities for spectrum access. Therefore, maximizing the ICR utility function is equivalent to minimizing the false-alarm rate, which is desirable for all ICRs.

The \emph{return function} $v(d_i,x^t)$ represents the additional utility gained by a degree-$d_i$ user that chooses to cooperate:
\begin{equation}\label{eq:revenuefunc}
v\left( {{d_i},{x^t}} \right) = {u_{{d_i}}}\left( {1,{x^t}} \right) - {u_{{d_i}}}\left( {0,{x^t}} \right).
\end{equation}
The user payoff functions are finally given by
\begin{equation}\label{eq:payoffs}
{\Pi _{{d_i}}}\left( {a,{x^t}} \right) = \left\{ {\begin{array}{*{20}{c}}
  {{u_{{d_i}}}\left( {1,{x^t}} \right) - {c_i},}&{a = 1} \\
  {{u_{{d_i}}}\left( {0,{x^t}} \right),}&{a = 0}.
\end{array}} \right.
\end{equation}

From \eqref{eq:utilityCSS}-\eqref{eq:payoffs}, we observe that (i) ${u_{{d_i}}}\left( {0,{x^t}} \right)={u_{{d_i}}}\left( {a,0} \right)$, (ii) ${u_{{d_i}}}\left( {a,{\tilde x^t}} \right)\geq {u_{{d_i}}}\left( {a,x^t} \right)$ if ${\tilde x^t} \geq x^t$ which implies the utilities exhibit positive externalities, and (iii) ${u_{{d_i}}}\left( {1,{x^t}} \right)>{u_{{d_i}}}\left( {0,{x^t}} \right)$ $\forall d_i>0,{x^t}>0$. Therefore, a selfish ICR has an incentive to participate in CSS if and only if
$v\left( {{d_i},{x^t}} \right) \ge E_i$, which occurs with probability
\begin{equation}\label{eq:ProbCoop}
\Pr\{v\left( {{d_i},{x^t}} \right)\} \ge E_i\}=F_E\left(v\left( {{d_i},{x^t}} \right)\right).
\end{equation}

 At each time step, the ICRs play their \emph{myopic best responses}
\begin{equation}\label{eq:BestResponse}
 {b_i}\left( {{{\mathbf{a}}_{ - i}}} \right) = \arg \mathop {\max }\limits_{a \in \mathcal{A}} {\Pi _{{d_i}}}\left( {a,{x^t}} \right) \:\forall i
\end{equation}
 to maximize their individual payoffs, based on the expected behavior of their neighbors parameterized by $x^t$. The diffusion of cooperation through the network is then captured by the evolution of $x^t$ over time. The game model is one of incomplete information since each user is assumed to know only its own degree and cost; the degree and cost realizations of its neighbors are completely unknown. Our game model therefore differs significantly from \cite{Cheng11}, where the utilities are defined in terms of throughput and a joint detection probability constraint is applied to all ICRs under CSS, leading to a coupled strategy space.
The non-cooperative diffusion network game in strategic form is succinctly represented as
${\Gamma ^d}\left( {V,\mathcal{A},{\Pi _{{d_i}}}\left( {a,{x^t}} \right)} \right)$.
For the incomplete information scenario, steady-state rest points of the diffusion process are described by Bayesian Nash Equilibria (BNE) \cite{Fudenberg}, which we characterize next.

\subsection{Structure of Equilibria}
\begin{theorem}\label{Thm:1}
The set of Bayesian Nash Equilibria of the diffusion network game ${\Gamma ^d}\left( {V,\mathcal{A},{\Pi _{{d_i}}}\left( {a,{x^t}} \right)} \right)$ is non-empty.
\end{theorem}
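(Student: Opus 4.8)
The plan is to reduce the existence of a Bayesian Nash Equilibrium to a one-dimensional fixed-point problem and then invoke Brouwer's theorem. The key observation is that in $\Gamma^d$ the coupling among players is mediated entirely through the scalar $x^t$, the probability that an arbitrary neighbor cooperates, so it suffices to exhibit a symmetric BNE characterized by a single consistency condition on $x^t$. First I would pin down the best-response structure. For a player of type $(d_i,E_i)$ holding belief $x^t$ about its neighbors, properties (i)--(iii) established above guarantee that the return function $v(d_i,x^t)$ is nonnegative and that cooperation is optimal if and only if $v(d_i,x^t)\ge E_i$. Hence every player's best response \eqref{eq:BestResponse} is a threshold rule: cooperate precisely when the private cost $E_i$ falls below the type-dependent threshold $v(d_i,x^t)$. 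Since the costs are i.i.d.\ with CDF $F_E$, a degree-$d$ player therefore cooperates with probability $F_E\!\left(v(d,x^t)\right)$, as already noted in \eqref{eq:ProbCoop}.

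Next I would impose consistency. Because $x^t$ refers to the cooperation probability of a \emph{neighbor}, the relevant degree distribution is $\tilde P(d)$ rather than $P(d)$, since degree-$0$ nodes cannot be neighbors. Averaging the per-type cooperation probabilities against $\tilde P$ yields the aggregate map
\begin{equation}\label{eq:Phi}
\Phi(x) = \sum\nolimits_{d=1}^{D} \tilde P(d)\, F_E\!\left(v(d,x)\right),
\end{equation}
and a symmetric BNE is exactly a belief that reproduces itself, i.e., a fixed point $x^{*}=\Phi(x^{*})$. Since $F_E$ is a CDF and $\sum\nolimits_{d\ge 1}\tilde P(d)=1$, the map $\Phi$ sends the compact convex interval $[0,1]$ into itself.

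It then remains to show that $\Phi$ admits a fixed point, for which I would verify continuity and apply Brouwer's theorem (equivalently, the intermediate value theorem to $\Phi(x)-x$ on $[0,1]$). Continuity of $\Phi$ follows because $v(d,x)=P_{FA,LSS}-P_{FA,CSS}(xd)$ is continuous in $x$ for each fixed $d$---the Gaussian $Q$-function and the radical appearing in \eqref{eq:PFA_CSS} are continuous for $xd\ge 0$---while $F_E$ in \eqref{eq:Energy_cdfPPP} is continuous, and a finite sum of continuous functions is continuous. Brouwer's theorem applied to the continuous self-map $\Phi:[0,1]\to[0,1]$ then delivers at least one $x^{*}$, establishing that the BNE set is non-empty.

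I expect the main obstacle to be conceptual rather than computational: carefully arguing that the full Bayesian game reduces \emph{without loss} to the scalar fixed-point equation \eqref{eq:Phi}. This requires justifying that the threshold characterization of best responses is exhaustive given the positive-externality structure, and that the aggregate cooperation probability perceived by any node is correctly the $\tilde P$-weighted average of the type-wise probabilities $F_E(v(d,x))$ under the mean-field/new-realization-per-epoch assumption. Once this reduction is secured, the continuity check and the Brouwer step are routine.
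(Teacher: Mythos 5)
Your argument is sound, but it takes a genuinely different route from the paper's. The paper proves Theorem~\ref{Thm:1} in the strategy-profile space via Tarski's fixed-point theorem: it observes that $\mathcal{A}=\{0,1\}$ is a compact sublattice of $\mathbb{R}$, shows by differentiation that $v(d_i,x^t)$ (hence each payoff) is increasing in $x^t$, concludes that the product best-response correspondence $\mathbf{b}=\prod\nolimits_i b_i$ is an increasing self-map, and invokes Tarski to obtain a non-empty set of fixed points, which are exactly the pure BNE. You instead work at the aggregate mean-field level: you derive the threshold best response, form the scalar consistency map $\Phi(x)=\sum\nolimits_d \tilde P(d)\,F_E\left(v(d,x)\right)$ --- which is precisely the map $\phi$ that the paper introduces only \emph{after} Theorem~\ref{Thm:2}, in \eqref{eq:BNE_fixp} --- and apply Brouwer (or the intermediate value theorem) on $[0,1]$. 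Each approach buys something. Yours is more elementary and delivers the fixed-point characterization \eqref{eq:BNE_fixp} simultaneously with existence. The paper's monotonicity argument, by contrast, needs no continuity at all, which is the natural hypothesis here: the action set is discrete and individual best responses jump, so your route works only because the PPP cost distribution \eqref{eq:Energy_cdfPPP} is atomless, making $F_E\circ v$ continuous (with an atom in the cost distribution, $\Phi$ could be discontinuous and Brouwer would fail, while Tarski survives); the lattice/supermodularity structure is also what the paper reuses in Theorem 3 to get convergence of the myopic dynamics. Two minor caveats in your write-up: the reduction from ``BNE of the Bayesian game'' to ``fixed point of $\Phi$,'' which you correctly flag as the main conceptual step, is exactly the point where the paper leans on \cite{Yariv07}, so you are adopting the same symmetric mean-field equilibrium notion rather than proving the reduction from scratch; and your claim that properties (i)--(iii) make $v(d_i,x^t)$ nonnegative is both unnecessary and delicate --- from \eqref{eq:PFA_CSS} one has $P_{FA,CSS}(c)>P_{FA,LSS}$ whenever $c<1$, so $v(d,x)<0$ when $xd<1$ --- but this does not harm your proof, since $F_E$ vanishes continuously at $0$ and the composition $F_E\left(v(d,x)\right)$ remains continuous.
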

\begin{proof}
The ICR strategy set $\mathcal{A}=\{0,1\}$ is a compact subset over $\mathbb{R}$, or equivalently, a sublattice of $\mathbb{R}$. The return function $v(d_i,x^t)$ is increasing in $x^t$ for each feasible value of $d_i$, which can be verified either by inspection of \eqref{eq:revenuefunc}, or by verifying that its first derivative
\begin{equation}
\begin{split}
  \frac{{dv\left( {{d_i},{x^t}} \right)}}{{d{x^t}}} = & - \frac{1}{{\sqrt {2\pi } }}{e^{\left( { - \frac{{\Delta \delta }}{\sigma }\sqrt {{{\left( {\left( {1 - \rho } \right){x^t}{d_i} + 2\rho } \right)} \mathord{\left/
 {\vphantom {{\left( {\left( {1 - \rho } \right){x^t}{d_i} + 2\rho } \right)} {1 + \rho }}} \right.
 \kern-\nulldelimiterspace} {1 + \rho }}}  + {Q^{ - 1}}\left( \beta  \right)} \right)}}  \\
   &\times \frac{{\Delta \delta }}{{2\sigma \sqrt {1 + \rho } }}\frac{{\left( {1 - \rho } \right){x^t}{d_i}}}{{{{\left( {\left( {1 - \rho } \right){x^t}{d_i} + 2\rho } \right)}^{{1 \mathord{\left/ {\vphantom {1 2}} \right. \kern-\nulldelimiterspace} 2}}}}}
\end{split}
\end{equation}
is positive (recall that $\Delta \delta <0$). The same is true for the individual utility functions and for the payoff ${\Pi _{{d_i}}}\left( {a,{x^t}} \right)$. Thus, the best response correspondence $b_i$ is an increasing function on $\mathcal{A}$ \cite{Yariv07,Vives90}. Define ${\mathbf{b}} = \prod\nolimits_i {{b_i}}$ as the overall best response correspondence obtained from the Cartesian product of the individual responses; ${\mathbf{b}}$ is also an increasing function.

Now, Tarski's fixed point theorem \cite{Tarski55,Fudenberg} states that ``Let $(S;\geq)$ with binary relation $`\geq'$ be a non-empty compact sublattice of $\mathbb{R}^n$ and $f:S\rightarrow S$ an increasing function on $S$, such that for $x,y\in S$, $y\geq x$ implies $f(y)\geq f(x)$. Then the set of fixed points of $f$ is non-empty." Therefore, based on the preceding discussion,
the best response correspondence ${\mathbf{b}}:\prod\nolimits_i\mathcal{A}\rightarrow \prod\nolimits_i\mathcal{A}$ has a non-empty set of fixed points. Since the set of fixed points of a best response correspondence is the set of pure BNE \cite{Fudenberg}, Theorem 1 follows.
\end{proof}

The result in Theorem~\ref{Thm:1} is refined further as follows.
\begin{theorem}\label{Thm:2}
The strategic diffusion network game ${\Gamma ^d}\left( {V,\mathcal{A},{\Pi _{{d_i}}}\left( {a,{x^t}} \right)} \right)$ has a unique BNE in pure strategies.
\end{theorem}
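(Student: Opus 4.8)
The plan is to identify each pure-strategy BNE with a fixed point of a single scalar \emph{mean-field consistency map} and then show that this map meets the diagonal exactly once. For a fixed aggregate cooperation probability $x\in[0,1]$, property (iii) together with \eqref{eq:ProbCoop} shows that the myopic best response of a degree-$d$, cost-$E_i$ ICR is the deterministic cutoff rule ``cooperate iff $E_i\le v(d,x)$,'' which is the \emph{unique} best reply to $x$ except on the measure-zero tie set $\{E_i=v(d,x)\}$. Averaging the cooperation indicator over the private cost (which yields $F_E(v(d,x))$ by \eqref{eq:ProbCoop}) and then over the neighbor degree distribution $\tilde P(d)$, the requirement that $x$ be consistent with the behavior it induces collapses to the single equation
\begin{equation}
x=\Phi(x):=\sum_{d=1}^{D}\tilde P(d)\,F_E\big(v(d,x)\big).
\end{equation}
Under this correspondence the game has a unique pure BNE if and only if $\Phi$ has a unique fixed point on $[0,1]$.

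Next I would record the structural properties of $\Phi$. Because $F_E$ is a CDF and $v(d,x)$ is bounded in $[0,1]$, the map $\Phi$ sends $[0,1]$ into $[0,1]$ and is continuous; because $F_E$ is nondecreasing and $v(d,\cdot)$ is strictly increasing for every $d\ge1$ --- exactly the derivative sign established in the proof of Theorem~\ref{Thm:1} --- the composite $\Phi$ is itself increasing. Together with $\Phi(0)\ge0$ and $\Phi(1)\le1$ this recovers existence (equivalently guaranteed by Theorem~\ref{Thm:1}) through the intermediate value theorem, so the whole burden of Theorem~\ref{Thm:2} rests on ruling out a second crossing.

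The crux is therefore to show that $g(x):=\Phi(x)-x$ has a single zero, and the natural sufficient condition is that $\Phi$ be a contraction, i.e. $\Phi'(x)<1$ on $(0,1)$, for then $g$ is strictly decreasing and crosses zero once. Differentiating gives
\begin{equation}
\Phi'(x)=\sum_{d=1}^{D}\tilde P(d)\,f_E\big(v(d,x)\big)\,\frac{\partial v(d,x)}{\partial x},
\end{equation}
where $f_E$ is the density associated with \eqref{eq:Energy_cdfPPP} and $\partial_x v$ is the strictly positive expression computed in Theorem~\ref{Thm:1}. Since $\sum_d\tilde P(d)=1$, it suffices to show that each coefficient $f_E(v(d,x))\,\partial_x v(d,x)$ is strictly less than $1$ uniformly in $d$ and $x$. \textbf{The hard part} is precisely this joint estimate: $v$ is bounded and slowly varying so $\partial_x v$ is small, but the cost density $f_E$ can blow up near the origin when the path-loss exponent satisfies $\alpha>2$; since the sensitivity $\partial_x v$ simultaneously vanishes as $x\to0^+$, the estimate must play these two effects off against each other in the \emph{product} $f_E(v)\,\partial_x v$ rather than bounding either factor alone, using the explicit forms \eqref{eq:PFA_CSS} and \eqref{eq:Energy_cdfPPP}. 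I would finally dispose of the boundary case: $x=0$ is a fixed point only if $v(d,0)=0$ for every degree carrying mass in $\tilde P$, and in that degenerate situation I would examine the sign of $g'(0)$ to confirm that no additional interior fixed point appears, thereby completing the single-crossing argument and the uniqueness of the pure BNE.
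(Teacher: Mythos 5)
Your reduction of the BNE set to fixed points of the scalar map $\Phi(x)=\sum_d \tilde P(d)\,F_E(v(d,x))$ is exactly the paper's own characterization \eqref{eq:BNE_fixp}, and your cutoff description of the best reply and the existence argument are sound. The problem is your uniqueness mechanism. You propose to establish a uniform slope bound $\Phi'(x)<1$, so that $\Phi(x)-x$ is strictly decreasing, and you explicitly defer the required joint estimate on $f_E(v(d,x))\,\partial_x v(d,x)$ as ``the hard part.'' That estimate is not merely missing; it is false for the paper's own cost model. From \eqref{eq:Energy_cdfPPP}, the cost density $f_E(y)=\frac{2\tilde c}{\alpha}\,y^{2/\alpha-1}e^{-\tilde c\,y^{2/\alpha}}$ diverges as $y\downarrow 0$ whenever $\alpha>2$ (the paper's simulations use $\alpha=2.5$). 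The compensation you invoke --- that $\partial_x v$ vanishes where $v$ vanishes --- does not occur: differentiating \eqref{eq:PFA_CSS} by the chain rule gives $\partial_x v(d,x)=\frac{1}{\sqrt{2\pi}}e^{-z(x)^2/2}\,\frac{\left|\Delta\delta\right|}{2\sigma\sqrt{1+\rho}}\cdot\frac{(1-\rho)d}{\sqrt{(1-\rho)xd+2\rho}}$, where $z(x)$ is the argument of $Q(\cdot)$ in \eqref{eq:PFA_CSS} at $c=xd$; for $d\geq 1$ and $\rho\in(0,1)$ this is bounded away from zero on all of $[0,1]$. (The extra factor of $x^t$ in the numerator of the first derivative printed in the paper's proof of Theorem~\ref{Thm:1} is a typo --- differentiating $\sqrt{(1-\rho)x^t d_i+2\rho}$ cannot produce it --- so it cannot be relied on for your cancellation.) Consequently, as $x$ decreases to the zero of $v(d,\cdot)$ (at $x=0$ under the paper's observation (i), or at $xd=1$ if \eqref{eq:PFA_CSS}--\eqref{eq:PFA_LSS} are used literally), the product $f_E(v(d,x))\,\partial_x v(d,x)$, and with it $\Phi'(x)$, is unbounded. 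No uniform bound $\Phi'<1$ holds, and your single-crossing argument collapses precisely at the step you identified as the crux. This does not contradict the theorem --- a contraction bound is sufficient but not necessary for uniqueness of a fixed point --- but it does invalidate the proposed route.

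The paper's proof is built to be immune to this blow-up: it shows $v(d,\cdot)$ is concave (via the sign of its second derivative), shows the PPP cost cdf $F_E$ is concave (and, being a cdf, nondecreasing), concludes that the composition $F_E(v(d,\cdot))$ and hence $\phi$ is concave in $x^t$, and then invokes the network-games result of \cite{Yariv07} that concavity of the cooperation map yields uniqueness of the equilibrium. Concavity is perfectly compatible with an infinite one-sided derivative at the left endpoint (as with $\sqrt{x}$), which is exactly the feature that destroys the contraction estimate; uniqueness follows not from a slope bound but from the single-crossing property of concave maps ($\phi(x)/x$ is nonincreasing, so the diagonal can be met at most once at positive $x$). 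If you want to repair your write-up, keep your fixed-point framing and the derivative computations you set up, but use them to verify concavity of each term $F_E(v(d,\cdot))$ rather than to bound $\Phi'$ uniformly.
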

\begin{proof}
For ${\Gamma ^d}\left( {V,\mathcal{A},{\Pi _{{d_i}}}\left( {a,{x^t}} \right)} \right)$ to have a unique BNE, it suffices to show that the probability of cooperation $F_E\left(v\left( {{d_i},{x^t}} \right)\right)$ is concave in $x^t$ for each $d_i$ \cite{Yariv07}. Defining $\eta  = \left( {1 - \rho } \right){x^t}{d_i} + 2\rho$, we first verify that $v\left( {{d_i},{x^t}} \right)$ is concave by computing its second derivative
\[\begin{array}{l}
\frac{{{d^2}v\left( {{d_i},{x^t}} \right)}}{{d{x^t}^2}} =  - \frac{{\Delta \delta {{\left( {\left( {1 - \rho } \right){d_i}} \right)}^2}}}{{2\sigma \sqrt {2\pi } \left( {1 + \rho } \right)}}{e^{ - 0.5{{\left( {\frac{{\Delta \delta }}{\sigma }\sqrt {\frac{\eta }{{\left( {1 + \rho } \right)}}}  + {Q^{ - 1}}\left( \beta  \right)} \right)}^2}}}\\
 \times \left[ {{\eta ^{ - 1}}\left( { - \frac{{\Delta \delta }}{\sigma }\sqrt {\frac{\eta }{{\left( {1 + \rho } \right)}}}  + {Q^{ - 1}}\left( \beta  \right)} \right)\left( {\frac{{\Delta \delta }}{{\sigma \sqrt {\left( {1 + \rho } \right)} }}} \right) - 0.5{\eta ^{ - 1.5}}}\right]
\end{array}\]
which is non-positive for each $d_i$.
For the PPP spatial model, the cost distribution in \eqref{eq:Energy_cdfPPP} is concave in $x$ since the second derivative
\[\frac{{{d^2}{F_E}\left( x \right)}}{{d{x^2}}} = \frac{2}{\alpha }\tilde c{x^{{{\left( {2 - 2\alpha } \right)} \mathord{\left/
 {\vphantom {{\left( {2 - 2\alpha } \right)} \alpha }} \right.
 \kern-\nulldelimiterspace} \alpha }}}{e^{ - \tilde c{x^{{2 \mathord{\left/
 {\vphantom {2 \alpha }} \right.
 \kern-\nulldelimiterspace} \alpha }}}}}\left( {\frac{{2 - \alpha  - 2\tilde c{x^2}}}{\alpha }} \right)\]
is negative, where we have defined ${\tilde c = \mu \pi {c^{{{ - 2} \mathord{\left/
 {\vphantom {{ - 2} \alpha }} \right.
 \kern-\nulldelimiterspace} \alpha }}}}$. $F_E\left(v\left( {{d_i},{x^t}} \right)\right)$ for the PPP model is the composition of two concave functions, therefore it is also concave in $x^t$.
\end{proof}

A fixed-point characterization of the unique BNE can be obtained as follows. Since $\tilde P\left( d \right)$ is the probability of having a neighbor of degree $d$ and ${F_E}\left( {v\left( {{d_i},{x^t}} \right)}\right)$ is the probability that a neighbor cooperates, due to the law of total probability we have
\begin{equation}\label{eq:BNE_fixp}
{x^t} = \phi\left({x^t}\right)\triangleq \sum\nolimits_d {\tilde P\left( d \right){F_E}\left( {v\left( {{d},{x^t}} \right)} \right)}.
\end{equation}
From Theorem~\ref{Thm:2}, the BNE of the diffusion game must satisfy \eqref{eq:BNE_fixp}, and the point satisfying \eqref{eq:BNE_fixp} must be the BNE \cite{Yariv07}.

The final piece of the puzzle relates to the achievability of the unique BNE specified above, which is resolved next. We will make use of the notion of a \emph{supermodular game} with positive externalities, where user payoffs are increasing in
the actions of their neighbors \cite{Goodman02,Johari09}. In other words, in a supermodular game the actions of users mutually reinforce the decisions of their neighbors to follow the same action. A classic example is the power control game in distributed networks; an increase in the perceived interference power at a node triggers it to also increase its own transmit power, and so on \cite{Goodman02}.
\begin{theorem}
The myopic best response dynamics of the selfish ICRs in \eqref{eq:BestResponse} is guaranteed to converge to the unique BNE of ${\Gamma ^d}\left( {V,\mathcal{A},{\Pi _{{d_i}}}\left( {a,{x^t}} \right)} \right)$.
\end{theorem}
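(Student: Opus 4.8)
The plan is to leverage the \emph{supermodular} structure of $\Gamma^d$ together with the uniqueness result of Theorem~\ref{Thm:2}. The key observation is that, under the large-population description, the myopic best-response update \eqref{eq:BestResponse} aggregates into the scalar recursion $x^{t+1}=\phi(x^t)$, where $\phi$ is exactly the fixed-point map defined in \eqref{eq:BNE_fixp}. I would first isolate the two properties of $\phi$ that drive the argument: (i) $\phi$ maps $[0,1]$ into itself, since it is a convex combination (with weights $\tilde P(d)$) of the probabilities $F_E(v(d,x^t))\in[0,1]$; and (ii) $\phi$ is nondecreasing and continuous on $[0,1]$, which follows because $v(d,x^t)$ is increasing in $x^t$ (established in the proof of Theorem~\ref{Thm:1}) and $F_E$ is a monotone, continuous cdf. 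Property (ii) is precisely the positive-externality condition, so the aggregate best response is monotone in the state $x^t$.

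Next I would invoke the standard monotone-convergence property of best-response dynamics in supermodular games \cite{Goodman02,Johari09}. Initializing at the smallest action profile (all ICRs adopt LSS, i.e., $x^0=0$) gives $x^1=\phi(0)\ge 0=x^0$, and monotonicity of $\phi$ then yields, by induction, a nondecreasing sequence $\{x^t\}$; being bounded above by $1$ it converges, and continuity of $\phi$ forces the limit to be a fixed point, i.e., the \emph{smallest} BNE. Symmetrically, initializing at the largest profile ($x^0=1$) produces a nonincreasing sequence converging to the \emph{largest} BNE.

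The decisive step is to collapse these two extremal limits. By Theorem~\ref{Thm:2} the game admits a unique BNE $x^*$ satisfying \eqref{eq:BNE_fixp}, so the smallest and largest equilibria coincide and both extremal trajectories converge to $x^*$. Finally, for an arbitrary initialization $x^0\in[0,1]$ I would sandwich the induced trajectory between the two extremal ones: since $0\le x^0\le 1$, monotonicity of $\phi$ preserves these inequalities under iteration, so the trajectory started from $0$ and the trajectory started from $1$ bracket $\phi^t(x^0)$ for every $t$. As both brackets tend to $x^*$, the squeeze theorem delivers $\phi^t(x^0)\to x^*$, establishing global convergence of the myopic best-response dynamics to the unique BNE.

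The main obstacle is not the convergence argument itself, which is the textbook monotone-iteration result once the scalar map is in place, but rather justifying rigorously that the per-user best responses \eqref{eq:BestResponse} — each driven by a private type (degree and energy cost) and a fresh random draw of neighbors at every epoch — aggregate exactly into the deterministic recursion $x^{t+1}=\phi(x^t)$. This reduction rests on the mean-field approximation implied by the re-sampling assumption of Sec.~\ref{sec:model}, and making it precise is the only genuinely delicate point; the monotonicity and uniqueness already furnished by Theorems~\ref{Thm:1} and~\ref{Thm:2} take care of everything else.
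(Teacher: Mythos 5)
Your proposal is correct, but it proves convergence by a different mechanism than the paper does. The paper stays at the level of the game itself: it verifies that $\Pi_{d_i}\left(a,x^t\right)$ is upper semi-continuous in $a$, that it has increasing differences in $\left(a,\mathbf{a}_{-i}\right)$, and that $\mathcal{A}$ is a sublattice of $\mathbb{R}$, concludes that ${\Gamma ^d}$ is a supermodular game with positive externalities, and then cites the known black-box result that myopic best responses converge to equilibrium in such games (combined with the uniqueness already established in Theorem~2). You instead collapse the dynamics to the scalar mean-field recursion $x^{t+1}=\phi\left(x^t\right)$ and run an explicit monotone-iteration argument: $\phi$ is a continuous nondecreasing self-map of $[0,1]$, so the trajectories from $x^0=0$ and $x^0=1$ converge monotonically to the smallest and largest fixed points, uniqueness (Theorem~2) makes these coincide, and a sandwich argument handles arbitrary initializations. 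What each route buys: the paper's argument is shorter and does not require formalizing the aggregation of individual best responses into a deterministic recursion, but it leans entirely on a cited theorem; yours is self-contained and makes the convergence mechanism transparent (it is essentially the Milgrom--Roberts extremal-trajectory argument specialized to a one-dimensional state). The one point you flag as delicate --- that the per-user responses, driven by private degrees and costs, aggregate exactly into $x^{t+1}=\phi\left(x^t\right)$ --- is not a gap relative to the paper's own standard of rigor: the paper adopts precisely this reduction in its fixed-point characterization \eqref{eq:BNE_fixp}, in its Algorithm (which updates via that equation), and in the mean-field dynamics of Sec.~III-C, so your proof sits at the same level of idealization as the paper's, while being more explicit about where monotonicity and uniqueness enter.
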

\begin{proof}
 Note that ${\Pi _{{d_i}}}\left( {a,{x^t}} \right)$ is upper semi-continuous in $a$ as the player transitions from $a=0$ to $a=1$. Furthermore, the payoff function satisfies the property of having increasing differences in $(a,\mathbf{a}_{-i})$, i.e.,
\begin{equation*}
\begin{split}
{\Pi _{{d_i}}}\left( {\{a'_i,\mathbf{a}'_{-i}\},{x^t}} \right)-{\Pi _{{d_i}}}\left( {\{a_i,\mathbf{a}'_{-i}\},{x^t}} \right)&\geq {\Pi _{{d_i}}}\left( {\{a'_i,\mathbf{a}_{-i}\},{x^t}} \right)\\
&-{\Pi _{{d_i}}}\left( {\{a_i,\mathbf{a}_{-i}\},{x^t}} \right)
\end{split}
\end{equation*}
 for all $a'_i\geq a_i$ and $\mathbf{a}'_{-i}\geq\mathbf{a}_{-i}$. We have already mentioned that $\mathcal{A}$ is a sublattice of $\mathbb{R}$. Therefore, ${\Gamma ^d}\left( {V,\mathcal{A},{\Pi _{{d_i}}}\left( {a,{x^t}} \right)} \right)$ satisfies the properties of a supermodular game with positive externalities \cite{Goodman02,Johari09}, and it is known that myopic best responses always converge to equilibrium in such games \cite{Fudenberg,Honig06}.
\end{proof}

The existence and achievability of a unique BNE as shown above is especially useful since $n$-player supermodular games with two actions per player can have up to $\left\lceil {\frac{n}{2}} \right\rceil$ pure-strategy BNE in general \cite{Dhall09}. The algorithmic description of the Bayesian CSS diffusion game introduced in this section is summarized below.

\begin{algorithm}
\renewcommand{\thealgorithm}{}
{{\footnotesize \caption{Bayesian CSS diffusion game}} \label{alg_mgp}
\begin{algorithmic}
\REQUIRE $0<\epsilon \leq 10^{-3}$, $t=0$
\STATE {\bf Initialization:} \\
\STATE Initialize $x^0$ with a random number between 0 and 1; $t=t+1$ \\
\WHILE{$\left| {\phi \left( {{x^{t + 1}}} \right) - \phi \left( {{x^t}} \right)} \right| > \epsilon $}
\STATE \quad ICRs play best-response strategies \\
\STATE \qquad ${b_i}\left( {{{\mathbf{a}}_{ - i}}} \right) = \arg \mathop {\max }\limits_{a \in \mathcal{A}} {\Pi _{{d_i}}}\left( {a,{x^t}} \right) \:\forall i$\\
\STATE \quad Update $\phi \left( {{x^{t + 1}}} \right)$ via fixed-point eq.~(16); $t=t+1$\\
\ENDWHILE
\end{algorithmic}
}
\end{algorithm}
\subsection{Time Dynamics of Diffusion Process}\label{sec:time}
Having established the properties of steady-state equilibria, it is also of interest to examine the evolution of the diffusion process over time. Since each ICR can be in one of two states (cooperative or non-cooperative) at each time instant, the diffusion process dynamics is a discrete-time Markov chain with $n^2$ possible states. A mean-field approximation alleviates the complexity of such stochastic dynamics by replacing the Markov-chain model with a deterministic discrete-time process. This approximation relies on the stochastic process remaining in the same subset space with a probability
arbitrarily close to one, provided that the population is large enough \cite{Pintado08}. The analytical complexity is further reduced by assuming that at each time instant the neighbors of each ICR are drawn randomly from the population. Therefore, the network degree distribution $P(d)$ can be used to characterize the diffusion process instead of having to account for all possible network connections and topologies, which is nearly intractable \cite{Yariv07,Pintado08,Lamberson10}.

An important metric that captures the evolution of the cooperative behavior of the network is the relative density (fraction) of cooperating ICRs $\xi^t$ at time $t$. By definition, we have
\begin{equation}
{\xi ^t} = \sum\nolimits_k {P\left( k \right)\xi _k^t}
\end{equation}
where $\xi_k^t$ is the relative density of degree-$k$ cooperative ICRs. Assuming the increments in time are arbitrarily small, the dynamic mean-field equation can be written as
\[\frac{{d\xi_k^t}}{{dt}} =  - \xi_k^t\left( {1 - {F_E}\left( {v\left( {k,{x^t}} \right)} \right)} \right) + \xi_k^t{F_E}\left( {v\left( {k,{x^t}} \right)} \right).\]
Setting the derivative to zero yields the stationary condition $\xi_k^t = {F_E}\left( {v\left( {k,{x^t}} \right)} \right)$. Thus at a steady-state BNE, we have
\begin{equation}
{\xi^t_{NE}} = \sum\nolimits_k {P\left( k \right){F_E}\left( {v\left( {k,{x^t}} \right)} \right)},
\end{equation}
which verifies that $\xi^t$ is increasing in $x^t$ due to ${F_E}\left( {v\left( {k,{x^t}} \right)} \right)$ having the same property.

\section{Diffusion Outcomes}\label{sec:Outcomes}
\subsection{Impact of Network Parameters}\label{sec:Impact}
The network parameters that potentially impact the outcome of the diffusion process are
\begin{itemize}
\item Network degree distribution $P(d)$ or $\tilde P(d)$.
\item PT detection constraint $\beta$.
\item Energy cost distribution $F_E(x)$.
\item Sensing protocol.
\item Shadowing correlation and path loss exponent.
\end{itemize}
Changes in any of the above parameters will either lead to a lower or higher equilibrium value of $x^t$. Furthermore, some of these parameters are coupled and cannot be manipulated independently, for e.g., varying the path loss exponent will also alter the energy cost distribution. 
The network mapping $\phi\left({x^t}\right)$ defined in \eqref{eq:BNE_fixp} offers a means of comparing the extent to which cooperation spreads in different networks. Given two networks that differ in one parameter with all others being the same, we now present a framework to evaluate which of the two is more conducive to the diffusion of cooperation.
We will require the following definitions.

\begin{definition}
Given two random variables $A$ and $B$ with distribution functions $F_A(y)$ and $F_B(y)$, if $F_A(y)\geq F_B(y)$ $\forall y\in \mathbb{R}$, then $A$ is first-order stochastically dominated by $B$, denoted as $B \succeq A$.
\end{definition}

\begin{definition}
 A network with map ${\phi'}\left({x^t}\right)$ is said to be \emph{more conducive to diffusion} compared to map $\phi\left({x^t}\right)$ if $\phi\left({x^t}\right)\leq\tilde{\phi}\left({x^t}\right)$ for each $x^t$, since that is equivalent to $\tilde{\phi}\left({x^t}\right)$ having a higher BNE point ${x^t}$. If the converse holds then the network ${\phi'}\left({x^t}\right)$ is less conducive to diffusion.
\end{definition}

We can then summarize the impact of varying a specific network parameter while fixing the remainder as follows.
\begin{itemize}
\item {\bf{Varying $\tilde P(d)$}}: If $\tilde P(d) \succeq \tilde P'(d)$, then $\phi\left({x^t}\right)=\sum\nolimits_d {\tilde P\left( d \right){F_E}\left( {v\left( {{d},{x^t}} \right)} \right)}\geq  \sum\nolimits_d {\tilde P'\left( d \right){F_E}\left( {v\left( {{d_i},{x^t}} \right)} \right)}=\tilde{\phi}'\left({x^t}\right)$. Increasing the probability of higher-degree neighbors aids diffusion.
\item \textbf{Varying $\beta$}: The false-alarm probability of any feasible decision rule is non-decreasing in $\beta$ (property of any receiver operating characteristic). For $\beta'\leq \beta$, we have $P'_{FA,CSS}(c)\leq P_{FA,CSS}(c)$ and $P'_{FA,LSS}\leq P_{FA,LSS}$. This implies $v'(d_i,x^t)\geq v(d_i,x^t)$ and ${F_E}\left( {v'\left( {{d_i},{x^t}} \right)}\right)\geq {F_E}\left( {v\left( {{d_i},{x^t}} \right)}\right)$, therefore ${\phi}'\left({x^t}\right)\geq {\phi}\left({x^t}\right)$.
    Lowering the detection probability constraint increases the gain from CSS while the cost remains unchanged, thus enhancing diffusion.
\item {\textbf{Varying $F_E(x)$}}: It is obvious that for $F_E(x)\succeq F'_E(x)$, we have ${\phi}'\left({x^t}\right)\leq {\phi}\left({x^t}\right)$. Increasing the cost of cooperation can only hinder the diffusion process.
\item {\textbf{Varying the sensing protocol}}: Since the sensing protocols in Sec.~\ref{subsec:SpecSens} are Neyman-Pearson optimal, for any competing sensing protocol with metrics $P'_{FA,CSS}(c)$ and $P'_{FA,LSS}$, we must have $P'_{FA,CSS}(c)\geq P_{FA,CSS}(c)$ and $P'_{FA,LSS}\geq P_{FA,LSS}$ for the same detection constraint $\beta$. This implies $v'(d_i,x^t)\leq v(d_i,x^t)$ and ultimately ${\phi}'\left({x^t}\right)\leq {\phi}\left({x^t}\right)$. Deviating from the optimal likelihood ratio test decreases the likelihood of cooperation.
\item \textbf{Varying $\rho$}: For shadowing correlation coefficients $\rho$ and $\rho'$, if $\rho'\geq\rho$, since the term $\sqrt {{{\left( {\left( {1 - \rho } \right)c + 2\rho } \right)} \mathord{\left/
 {\vphantom {{\left( {\left( {1 - \rho } \right)c + 2\rho } \right)} {1 + \rho }}} \right.
 \kern-\nulldelimiterspace} {1 + \rho }}}$ in \eqref{eq:PFA_CSS} is decreasing in $\rho$ we obtain ${\phi}'\left({x^t}\right)\leq {\phi}\left({x^t}\right)$. Increased spatial correlation diminishes the gain from CSS and makes cooperation less likely.
\item \textbf{Varying $\alpha$}: For path loss exponents $\alpha,\alpha'$, if $\alpha\geq\alpha'$ then $F_E(x)\succeq F'_E(x)$ and $\tilde{\phi}'\left({x^t}\right)\geq \tilde{\phi}\left({x^t}\right)$. Increasing the cost distribution clearly inhibits diffusion.
\end{itemize}

\subsection{Impact of Multiple Antennas}
While the development thus far has considered the case of single-antenna users, it is worthwhile to investigate the multiple-input multiple-output (MIMO) scenario where the ICRs are equipped with $M$ antennas each. The most notable impact is on the performance of the LSS and CSS likelihood ratio tests.
Specifically, we must now account for the correlation between antennas at each ICR, in addition to the spatial correlation across the ICRs.

Let the $(M \times 1)$ received vector at the $i^{th}$ ICR be
\[\mathbf{z}_i=\mathbf{s}_i+\mathbf{n}_i\]
 where $\mathbf{s}_i$ is the observation of the primary signal and $\mathbf{n}_i\sim \mathcal{CN}(\mathbf{0},\sigma_n^2\mathbf{I})$ is complex additive Gaussian noise independent of $\mathbf{s}_i$. Due to the effect of antenna correlation, ${{\mathbf{\Sigma }}_{s,i}} = E\left\{ {{{\mathbf{s}}_i}{\mathbf{s}}_i^H} \right\}$ is a non-diagonal Hermitian matrix. Since each ICR reports its local SNR under CSS, the SNR-maximizing strategy in the multi-antenna scenario is to coherently combine the SNRs measured at each of the $M$ antennas via maximum-ratio combining (MRC).

 Given the eigenvalue decomposition ${{\mathbf{\Sigma }}_{s,i}} = {\mathbf{U\Lambda }}{{\mathbf{U}}^H}$, the optimal combining rule is to first decorrelate the received signal as ${{{\mathbf{\tilde z}}}_i} = {{\mathbf{U}}^H}{{\mathbf{z}}_i}$, followed by MRC to yield the output SNR ${\tilde\gamma _i} = {{{\mathbf{\tilde z}}_i^H{{{\mathbf{\tilde z}}}_i}} \mathord{\left/
 {\vphantom {{{\mathbf{\tilde z}}_i^H{{{\mathbf{\tilde z}}}_i}} {2\sigma _n^2}}} \right.
 \kern-\nulldelimiterspace} {2\sigma _n^2}}$ \cite{Beaulieu02}. Since $\mathbf{U}$ is unitary, the distribution of ${\tilde\gamma _i}$ coincides with that of the combiner output SNR ${\gamma _i} = {{{\mathbf{z}}_i^H{{\mathbf{z}}_i}} \mathord{\left/
 {\vphantom {{{\mathbf{z}}_i^H{{\mathbf{z}}_i}} {2\sigma _n^2}}} \right.
 \kern-\nulldelimiterspace} {2\sigma _n^2}}$ in the uncorrelated antenna scenario. The MRC output SNR ${\gamma _i}$ is the sum of the $M$ i.i.d. per-antenna SNRs, each of which is a Gaussian random variable (in dB) with mean $\mu(r)$ and variance $\sigma^2$, which implies ${\gamma _i}\sim \mathcal{N}(M\mu(r),M\sigma^2)$. Assuming error-free SNR reports as before, the binary hypothesis test under MIMO CSS is
\begin{align*}
  {{H_0}}:&\quad{{\mathbf{y}} \sim \mathcal{N}\left( M{\mu \left( r \right){\mathbf{1}},{\sigma ^2}M{\mathbf{\Sigma }}} \right)},   \\
  {{H_1}}:&\quad{{\mathbf{y}} \sim \mathcal{N}\left( M{\mu \left( {r + \delta } \right){\mathbf{1}},{\sigma ^2}M{\mathbf{\Sigma }}} \right)}.
\end{align*}
The corresponding probability of false alarm can then be written as
\begin{equation}\label{eq:PFA_CSSMIMO}
{P^{M}_{FA,CSS}}\left( c \right) = 1 - Q\left( {\frac{\sqrt{M}{\Delta \delta }}{\sigma }\sqrt {\frac{{\left( {1 - \rho } \right)c + 2\rho }}{{1 + \rho }}}  + {Q^{ - 1}}\left( \beta  \right)} \right)
\end{equation}
and the false alarm probability under MIMO LSS is
\begin{equation}\label{eq:PFA_LSSMIMO}
{P^M_{FA,LSS}} = 1 - Q\left( {\frac{\sqrt{M}{\Delta \delta }}{\sigma } + {Q^{ - 1}}\left( \beta  \right)} \right).
\end{equation}
Assume that the SNR/decision reports are conducted using one out of the $M$ antennas, such that the cost distribution is unchanged. It then follows that $v^M(d_i,x^t)\geq v(d_i,x^t)$ and ${\phi}^M\left({x^t}\right)\geq {\phi}\left({x^t}\right)$, where the superscript `M' denotes metrics of the MIMO scenario. In other words, deploying multiple antennas at the ICRs \emph{enhances} the diffusion of cooperation in the network. Furthermore, the trends and conclusions derived in Sec.~\ref{sec:Impact} can be shown to also hold for the MIMO ICR network.

\section{Simulation Results}\label{sec:Sim}
In this section, we present the results of several numerical
experiments that investigate how conducive different decentralized ICR networks are to the diffusion of cooperation. 
The background AWGN
variance at all receivers is assumed to be unity. In addition, unless specified otherwise we set the number of ICRs to $n=18$, number of antennas per ICR as $M=1$, the network degree distribution as $[\begin{array}{*{20}{c}}
  {P\left( 1 \right) = 0.37}&{P\left( 2 \right) = 0.33}&{P\left( 3 \right) = 0.3}
\end{array}]$, $D=20$m, the detection probability target to $\beta=0.95$, path-loss exponent $\alpha=2.5$, proportionality constant $c=2$, shadowing correlation $\rho={e^{{{ - 0.1R} \mathord{\left/ {\vphantom {{ - 0.1R} {\left( {n - 1} \right)}}} \right.
 \kern-\nulldelimiterspace} {\left( {n - 1} \right)}}}}$ \cite{Sousa07}, sensing parameters $\delta=-0.09,\sigma=3.3$ \cite{Sousa07}, range $R=2$m, and initial cooperation probability $x^0=0.3$.

\begin{figure}[htbp]
\centering
\includegraphics[width=\linewidth]{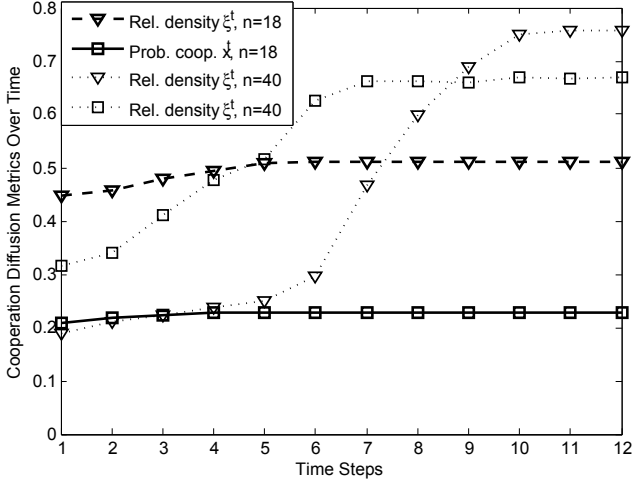}
\caption{Evolution of probability of cooperation and relative density of cooperating ICRs over time $t$.}
\label{fig_vst}
\end{figure}
Fig.~\ref{fig_vst} presents the evolution of diffusion metrics $x^t$ and $\xi^t$ over time for $n=18,R=2$m, and $n=40,R=1.25$m, assuming $M=1$ and an initial value of $x^0=0.2$. We observe that for the case $n=18$, $x^t$ and $\xi^t$ rapidly attain their equilibrium values within four and six time steps, respectively, and do not exhibit major changes from the initial values. On the other hand, the scenario of $n=40$ exhibits much more dramatic increases over time since CSS is more attractive given the lower energy cost due to $R$ being smaller. The cooperation metrics increase monotonically over time, despite ICRs having the freedom to switch back to LSS from CSS in previous time instants. This implies that cooperation is a mutually reinforcing behavior for ICRs for whom the gain from CSS outweighs the cost, in spite of their myopic view of the network.

\begin{figure}[htbp]
\centering
\includegraphics[width=\linewidth]{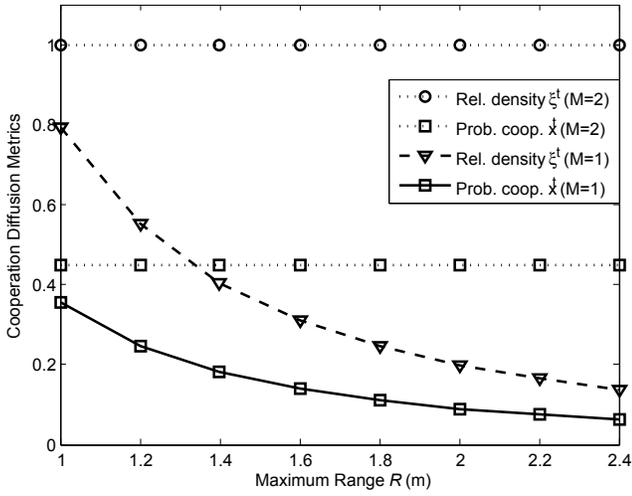}
\caption{Equilibrium probability of cooperation and relative density of cooperating ICRs versus $R$.}
\label{fig_vsR}
\end{figure}
In Fig.~\ref{fig_vsR}, equilibrium metrics $x^t$ and $\xi^t$ are displayed as the maximum inter-ICR communication range $R$ increases. As discussed in Sec.~\ref{sec:Outcomes}, an increase $R$ raises the cost of cooperation while the benefit from CSS remains unchanged, and this is evident from the sharp decrease from 80\% to 18\% in relative density for $M=1$. Interestingly, when ICRs are equipped with an additional antenna, the increase in $R$ is not enough to offset the gain from CSS, and in fact global cooperation is the equilibrium result for the scenario of $M=2$.

\begin{figure}[htbp]
\centering
\includegraphics[width=\linewidth]{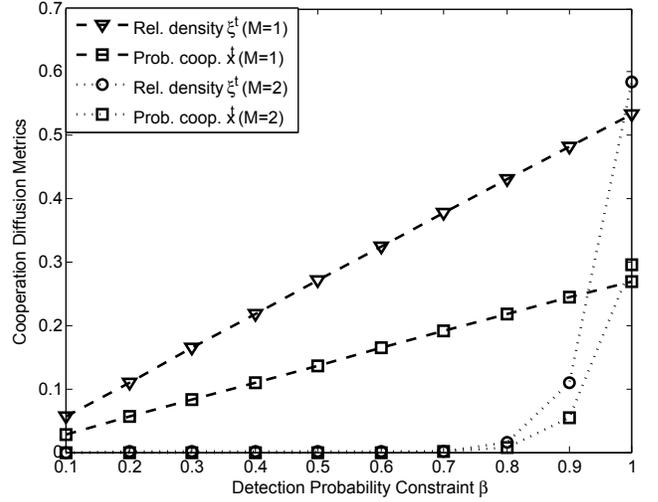}
\caption{Equilibrium probability of cooperation and relative density of cooperating ICRs versus $\beta$.}
\label{fig_vsbeta}
\end{figure}
In Fig.~\ref{fig_vsbeta}, equilibrium metrics $x^t$ and $\xi^t$ are displayed for $M=1,2,$ as a function of the primary user detection probability constraint $\beta$. Interestingly, for less stringent values of $\beta$, MIMO ICRs tend to abstain from cooperation since their local false-alarm rate is correspondingly low and the benefit of cooperation does not outweigh the cost. As $\beta$ becomes more stringent, there is clearly a change in MIMO ICR behavior for $\beta\geq0.8$ where they become much more conducive to CSS.

\begin{figure}[htbp]
\centering
\includegraphics[width=\linewidth]{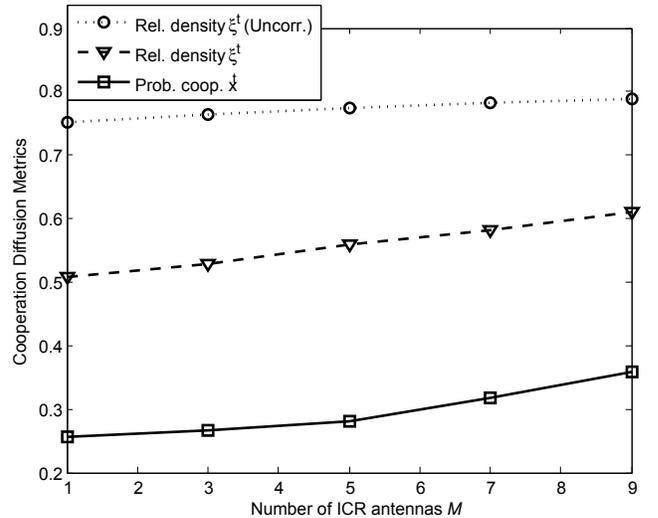}
\caption{Equilibrium probability of cooperation and relative density of cooperating ICRs versus $M$.}
\label{fig_vsM}
\end{figure}
In Fig.~\ref{fig_vsM}, equilibrium metrics $x^t$ and $\xi^t$ are displayed as the number of ICR antennas is increased. The shadowing correlation assumption strongly inhibits the increase of diffusion as $M$ goes from one to three, in contrast to the ideal uncorrelated case of $\rho=0$. Since $M\leq 3$ for practical mobile terminals, only minor gains in cooperative behavior are apparent from deploying multiple antennas in a realistic fading environment.

\section{Discussion}\label{sec:Discuss}
In this section we touch upon several directions for further research, ranging from network topology optimization to more sophisticated ICR communication models and best-response strategies.
\subsection{Network Optimization}
From the perspective of the ICR network designer, it may be desirable to achieve a target equilibrium relative density $\xi^t$ that corresponds to a satisfactory ICR spectrum usage efficiency (in terms of network-wide false alarm rate). Of the parameters listed in Sec.~\ref{sec:Impact}, the majority are either determined by nature ($\rho$,$\alpha$) or by spectrum regulations ($\beta$, sensing protocol), which leaves the network degree distribution $P(d)$ as a metric that could be manipulated after ICR deployment. There are potentially an infinite number of feasible degree distributions that yield a target equilibrium density $\xi^t$.
Bettstetter \cite{Bettstetter02} states that for a 2D-PPP spatial model with intensity $\alpha$, the probability that an arbitrary node has degree $k$ for given range $R$ is
\[\Pr \left( {{d_i} = k} \right) = \frac{{{{\left( {\alpha \pi {R^2}} \right)}^k}}}{{k!}}{e^{ - \alpha \pi {R^2}}}\]
from which we can determine the required range and the corresponding ICR transmit power (assumed to be proportional to $R^{-\alpha}$) such that this probability is arbitrarily close to 1.
Therefore, a n\"{a}ive approach for creating a desired degree distribution $P(d)$ is to assign corresponding transmit powers in the same proportion. Needless to say, such an approach neglects intra-ICR interference, and a general solution remains an open problem (see \cite{Swami11,Ramanathan00} for more sophisticated analyses of topology control in ad hoc networks).

\subsection{Imperfect Inter-ICR Reports}
While shadowing and noise impairments are considered on the sensing channels, the assumption of noiseless inter-neighbor reporting channels in this paper provides an upper bound on the diffusion level of a decentralized ICR network. This is because imperfect SNR reports will degrade the performance of CSS, which in turn diminishes the return function in \eqref{eq:revenuefunc} and thus the probability of cooperation in \eqref{eq:ProbCoop}. Nonetheless, low-rate SNR reports with sufficient error control coding can approach the performance promised by the ideal reporting channel assumption.

A related question worth answering is the following: given the selfish assumption for the individual ICRs, do they have an incentive to behave maliciously by intentionally reporting false SNR values to their neighbors? The impact and detection of falsified reports in cooperative spectrum sensing have been studied extensively in recent literature \cite{Cabric10,Cabric12}. In our system model, if a malicious ICR overhears its neighbor's reports and assumes they are truthful, it can combine them with its own observation to privately compute $P_{FA,CSS}$ while publicly reporting a false value. If these false reports increase the publicly-known probability of false alarm, then the gain from CSS is diminished and n\"{a}ive ICRs are less likely to choose cooperation. In the next iteration, the malicious ICR is then less likely to receive SNR reports from its neighbors, which discourages future false SNR reports and shows that the diffusion game intrinsically rewards truthful behavior. The security aspects of diffusion-based CSS therefore invite further study, especially since the reputation-based and statistical false report detection techniques in \cite{Cabric10,Cabric12} may not be applicable under the mean-field assumption.

\subsection{Beyond Myopic Strategies}
We have seen that the myopic best response strategy in \eqref{eq:BestResponse} converges to the BNE of the diffusion game. On the other hand, the adage ``the more you know the better you can do" certainly holds true for the diffusion game as well. Specifically, if ICRs can obtain more information regarding their local neighborhood, they can conceivably make better choices regarding whether to take part in CSS. We briefly mention two such possibilities. Firstly,
side information regarding the degrees of an ICR's neighbors can be exploited by choosing to perform CSS only with high-degree neighbors, since higher-degree nodes enhance diffusion as shown in Sec~\ref{sec:Impact}. The best response strategy would then become dependent on the degrees of the neighbors \cite{Johari09}. Secondly, additional side information regarding the cost realized at its neighbors can also be useful to an ICR. Since the cost is assumed to be realized once at the initialization of the diffusion process, an ICR can exploit this information to predict future values of $x^t$ and construct a best-response strategy with memory, as opposed to the memoryless myopic best response approach in the current model.

\section{Conclusion}\label{sec:Conclusion}
This work investigates the diffusion of cooperative behavior over time in a decentralized cognitive radio network with selfish spectrum-sensing users. The system is modeled as a partially connected network with a statistical distribution of the degree of the users, who play their myopic best responses to the actions of their neighbors at each iteration. We proved the existence of an unique Bayesian Nash Equilibrium for the diffusion game, provide its fixed-point characterization, and show it is achievable with myopic best responses. The impacts of network topology, channel fading statistics, and sensing protocol on the outcome of the diffusion process have been examined. Simulation results demonstrate how conducive different network scenarios are to the diffusion of cooperation, and we conclude with a discussion on additional refinements to reporting channel models, security aspects, and best response strategies.


\end{document}